 \newtheorem{thm}{Theorem}[section]
 \theoremstyle{definition}
 \theoremstyle{remark}
 \newtheorem{rem}[thm]{Remark}
 \numberwithin{equation}{section}
\newtheorem{defi}[thm]{Definition}
\begin{document}

\title[Comparative prediction of confirmed cases with COVID-19 pandemic.]{Comparative prediction of confirmed cases with \\ COVID-19 pandemic by machine learning,   \\    deterministic and stochastic   SIR models}%
\author[Ndiaye et al.]{Babacar Mbaye Ndiaye, Lena Tendeng, Diaraf Seck}
\address{Laboratory of Mathematics of Decision and Numerical Analysis\br
University of Cheikh Anta Diop\br
BP 45087, 10700. Dakar, Senegal}

\email{babacarm.ndiaye@ucad.edu.sn, lenatendeng@yahoo.fr,\\ diaraf.seck@ucad.edu.sn}

\thanks{This work was completed with the support of the NLAGA project}
\keywords{COVID-19, stochastic SIR model, existence, stopping time, machine learning, forecasting.}

\date{April 20, 2020}
\begin{abstract}
In this paper, we propose a machine learning technics and SIR models (deterministic and stochastic cases) with numerical approximations to predict the number of cases infected  with the COVID-19, for both in few days and the following three weeks.  
Like in \cite{ndiaye} and based on the public data from \cite{datahub}, we estimate parameters and make predictions to help on how to  find concrete actions to control the situation. 
Under optimistic estimation, the pandemic in some countries will end soon, while for most of the countries in the world, the hit of anti-pandemic will be no later than the beginning of May.
\end{abstract}

\maketitle

\section{Introduction}\label{intro}
The deterministic SIR models are usually investigated through ordinary differential equations for prediction \cite{howard}. It can also be viewed in a stochastic framework,  which is more realistic but also more complicated to analyze.\\
In population dynamics, the deterministic models are developed  with success in many situations. In these models appear some parameters  and in concrete applications, its  estimated values play  a crucial role in the prediction of the  studied system  and even in the decision-making policies. Usually one considers that they  are  deterministic, but many times due to errors in measurements, variability in the populations, and other factors that introduce uncertainties, one can think of  parameters as random variables. 
To consider these aspects, it is necessary to get some skills in probability theory, statistics, and differential equations.  And the stochastic differential equation theory invites itself and fortunately there are many excellent books in this topics. For more details see for instance \cite{LCE,Ok,IM}.\\
And with tools developed in this topic, interdisciplinary areas such as mathematical biology, biostatistics, and bio-engineering have become possible with rapid growth. There are interesting works done and the reader can see for example \cite{A1,A2,LS} and references therein.\\
Our aim in this work is one hand, to propose a stochastic model to analyze the pandemic of COVID-19. And on the other hand, we would like to deepen the numerical analysis  of  such phenomena in situations where the  settings may be  random.\\
Many theoretical studies of the evolution of infectious diseases of  the COVID-19 are recently proposed in \cite{ndiaye,seydi, hiroshi,Kamalich,steven}.  
\noindent In the simple SIR model, the total population for each country \cite{pyramid} is assumed to be constant and divided into three classes (susceptible, infected, and recovered).  
In the numerical simulations, we start with the deterministic case, following by the new proposed stochastic model in section \ref{model}, then 3 others SIR models and machine learning for forecasting where algorithms can include artificial neural networks, deep learning, association rules, decision trees, reinforcement learning and bayesian networks \cite{arkes,litsa}.\\
\noindent First, we collect carefully the pandemic data from \cite{datahub},  
e.g. \url{https://www.tableau.com/covid-19-coronavirus-data-resources},  from January 21, 2020 to April 19, 2020. After exploratory data analysis,  we propose six(6) technics,  a simple SIR model, a stochastic SIR with Brownian motion, SIR with Deaths, SIR with Fatal, SIR Exposed and Waiting cases with Fatal, and machine learning tools, to analyze the coronavirus pandemic in the worldwide. A special study is done for Senegal.

\noindent The paper is organized as follows. In section \ref{model}, we present a stochastic SIR model with the existence, uniqueness and some qualitative results. In section \ref{numsim}, we  present approximation methods to estimate different parameters involving in the  SIR models. It is followed by various numerical tests for comparative prediction. Finally, in section \ref{ccl}, we present conclusions and perspectives.

\section{Modeling, Existence, Uniqueness and  Properties}\label{model}
\subsection{Stochastic Model}
The stochastic aspect would be very interesting due to the lack of data, especially in the case of Senegal with the expansion of COVID 19 pandemic. Some infected people do not develop the disease and spread it (the case of some children, as it is suspected) not to mention the unmonitored asymptomatic cases  that  are the cause of community transmission. There is also the probability of touching infected objects:  too many random factors in the transmission of the disease. Let us propose  among many possibilities the following stochastic SIR model:
 
 \begin{equation}\label{stoeq1}
\left\{ 
\begin{array}{ll}
\frac{dS}{dt}=-\beta IS + \sigma_1 \xi_1\\[3mm]
\frac{dI}{dt}= \beta IS-\gamma I + \sigma_2 \xi_2\\[3mm]
\frac{dR}{dt}=\gamma I
\end{array}\right.
\end{equation}
\begin{itemize}
\item $S$ is the number of individuals susceptible to be infected at time $t$.
\item $I$ is the number of both asymtomatic and symptomatic infected individuals at time $t$.
\item $R$ is the number of recovered persons at time $t$.
\item The parameters $\beta$ and $\gamma$ are  respectively the transmission rate through exposure of the disease and the rate of recovering.
\item $\sigma_1$ and $\sigma_2$ are  diffusion coefficients that are interpreted as  volatility rates. They  may  depend on the time $t$, the suspects and the infected. $\xi_1$ and  $\xi_2$ are  white noises.
\end{itemize}
One interesting case that  we endeavor  to look at  in this work, is when  the above system is written as  follows:
\begin{equation}\label{stoeq2}
\left\{ 
\begin{array}{ll}
dS=-\beta IS dt   - \sigma_1 I S dW_1\\[3mm]
dI= \beta IS dt -\gamma I + \sigma_2 I S dW_2\\[3mm]
\frac{dR}{dt}=\gamma I
\end{array}\right.
\end{equation}
 $W_1$ and $W_2$ are classical Brownian motions;  $\sigma_1$ and $\sigma_2$ are positive constants.
 \begin{rem}
 Let us note that  instead  the term $\sigma_i IS dW_i, i= 1, 2$  one could propose   $\sigma_i \sqrt{IS} dW_i, i= 1, 2.$
 \end{rem}
\noindent As in the deterministic case,  let us consider  a sample of population. We denote it by $N$ satisfying the relation
 \begin{eqnarray}
 N= S(t)+ I(t)+ R(t)
 \end{eqnarray}
 The balance property of the above equation  implies the  following constraint
 \begin{eqnarray}\label{bal}
 -\sigma_1 dW_1+  \sigma_2 dW_2= 0.
 \end{eqnarray}
 With this constraint, we shall need in the simulation of the  stochastic case  to have an estimation of  one of $\sigma_1, \sigma_2.$\\
  Before proceeding further,  we are  going  to  present some classical results  for the  stochastic  differential equation model,  such as existence and uniqueness results, the stopping notion, and its properties. 

\subsection{Existence and Uniqueness}
This section is  started by a few  useful reminds in probability  theory. For more details, see for instance \cite{LCE,Ok}. 
\begin{defi}
Let $\Omega$ be a set.\\
A $\sigma-$algebra is a collection $\mathcal U$ of subsets of $\Omega$ with the following properties
\begin{itemize}
\item $\emptyset, \Omega \in \mathcal U;$
\item if $A\in \mathcal U,$ then $A^c:= \Omega \backslash A \in  \mathcal U;$
\item if $A_1,  A_2,..., \in \mathcal U$ then  $\displaystyle \bigcup_{i=1}^{\infty}A_i,  \bigcap_{i=1}^{\infty}A_i \in   \mathcal U.$
\end{itemize}
\end{defi}
\noindent Let $ W (.) $ be a $1-$dimensional Brownian motion defined on some probability space $(\Omega, \mathcal U , \mathbb{P} ).$
\begin{defi}
The $\sigma-$algebra $\mathcal W (t):= \mathcal U (W (s)/ 0\leq s\leq t)$ is called the history of Brownian motion up to and including $t.$\\
The $\sigma-$algebra $\mathcal W^{+}(t):= \mathcal U (W (s)/ s\geq t)$ is called the future of Brownian motion beyond time  $t.$
\end{defi}
\begin{defi}
A family $\mathcal F (.)$ of $\sigma-$ algebras  included in $\mathcal U$ is called  non anticipating with respect to $W(.)$ if
\begin{enumerate}
\item $\mathcal F (t) \supseteq \mathcal F (s)$ for all $t\geq s \geq 0;$
\item $\mathcal F (t) \supseteq \mathcal W (t)$ for all $t \geq 0;$
\item $\mathcal F (t) $ is independent of $\mathcal W^{+}(t)$ for all $t \geq 0;$
\end{enumerate}
It is referred to $ \mathcal F (t)$ as a filtration.
\end{defi}
\noindent Let us state  a general  a Cauchy Lipschitz Theorem  version and for the details (\ref{stoeq2}), see for example \cite{LCE}.
\begin{thm}\label{GCL}
Suppose that the two functions $\textbf{b}:\mathbb R^n\times[0,T]\longrightarrow \mathbb R^n$ and $\textbf{B}:\mathbb R^n\times[0,T]\longrightarrow \mathbb M^{m\times n}$ are continuous and satisfy the following conditions:
\begin{itemize}
\item[(a)] $\mid \textbf{b}(x,t)-\textbf{b}(\hat x,t)\mid \leq L\mid x-\hat x\mid $  \;and\; $\mid \textbf{B}(x,t)-\textbf{B}(\hat x,t)\mid \leq L\mid x-\hat x\mid $    $\forall \;0\leq t\leq T  \; and\; x,\hat x \in \mathbb R^n$.
\item[(b)] $\mid \textbf{b}(x,t)\mid \leq L(1+\mid x\mid) $ and $\mid \textbf{B}(x,t)\mid \leq L(1+\mid x\mid) $ $\forall \;0\leq t\leq T  \;  , x \in \mathbb R^n$ 
for some constant L.\\
Let $\textbf{X}_0$ be any $\mathbb R^n$-valued random variable such that
\item[(c)] $E(\mid \textbf{X}_0\mid^2)<\infty$, and
\item[(d)] $\textbf{X}_0$ is independent of  $\mathcal W^+(0)$,
where $\textbf{W}(.)$ is a given m-dimensional Brownian motion.
\end{itemize}
Then there exists a unique solution $\textbf{X}\in \mathbb L^2_n(0,T)$ of the stochastic differential equation:
\begin{equation}\label{stoeq}
\left\{ 
\begin{array}{ll}
d\textbf{X}=\textbf{b}(\textbf{X},t){dt} + \textbf{B}(\textbf{X},t) d\textbf{W}  \ \  \ \ (0\leq t\leq T) \\
\textbf{X}(0)=\textbf{X}_0
\end{array}\right.
\end{equation}
\end{thm}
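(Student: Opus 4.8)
The plan is to establish this as a classical Picard–Lindelöf-type result adapted to the Itô setting, constructing the solution by successive approximations and then proving uniqueness via a Gronwall argument. First I would set up the iteration scheme: define $\textbf{X}^0(t) \equiv \textbf{X}_0$ and inductively
\[
\textbf{X}^{n+1}(t) = \textbf{X}_0 + \int_0^t \textbf{b}(\textbf{X}^n(s),s)\,ds + \int_0^t \textbf{B}(\textbf{X}^n(s),s)\,d\textbf{W}(s),
\]
where the second integral is an Itô integral. One checks first that each $\textbf{X}^n$ is well defined and lies in $\mathbb{L}^2_n(0,T)$: this uses the linear growth bound (b) together with the Itô isometry to control $E(|\textbf{X}^{n+1}(t)|^2)$ in terms of $E(|\textbf{X}^n(s)|^2)$, and an induction (or a uniform-in-$n$ a priori estimate via Gronwall) gives a bound independent of $n$. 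Measurability/non-anticipating properties are inherited at each step from condition (d) and the construction, so each $\textbf{X}^n(t)$ is $\mathcal{F}(t)$-adapted.

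Next I would prove convergence. Set $d_n(t) := E\big(\sup_{0\le r\le t}|\textbf{X}^{n+1}(r)-\textbf{X}^n(r)|^2\big)$. Using the elementary inequality $|a+b|^2 \le 2|a|^2+2|b|^2$, the Cauchy–Schwarz inequality for the drift term, the Burkholder–Davis–Gundy (or Doob) inequality together with the Itô isometry for the martingale term, and finally the global Lipschitz bound (a), one obtains a recursive estimate of the form $d_n(t) \le C \int_0^t d_{n-1}(s)\,ds$ for a constant $C$ depending only on $L$ and $T$. Iterating yields $d_n(t) \le (Ct)^n d_0(T)/n!$, which is summable; hence $\{\textbf{X}^n\}$ is Cauchy in the Banach space $\mathbb{L}^2_n(0,T)$ (indeed converges uniformly on $[0,T]$ almost surely along a subsequence), and the limit $\textbf{X}$ is again non-anticipating and square-integrable. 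Passing to the limit in the integral equation — justified because both the Lebesgue and Itô integrals are continuous in the relevant $L^2$ sense under the Lipschitz bound — shows $\textbf{X}$ solves \eqref{stoeq}.

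For uniqueness, suppose $\textbf{X}$ and $\hat{\textbf{X}}$ are two solutions in $\mathbb{L}^2_n(0,T)$ with the same initial datum. Subtracting the integral equations, squaring, and applying the same three estimates (elementary inequality, Cauchy–Schwarz on the drift, Itô isometry on the diffusion) together with (a) gives
\[
\phi(t) := E\big(|\textbf{X}(t)-\hat{\textbf{X}}(t)|^2\big) \le C\int_0^t \phi(s)\,ds,
\]
and since $\phi$ is bounded on $[0,T]$ (both solutions are in $\mathbb{L}^2_n$), Gronwall's inequality forces $\phi \equiv 0$; a continuity/version argument then upgrades this to $\textbf{X} = \hat{\textbf{X}}$ as elements of $\mathbb{L}^2_n(0,T)$.

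The main obstacle is not any single step but the careful bookkeeping of the stochastic estimates: one must repeatedly invoke the Itô isometry (or its $\sup$-version through Doob/BDG) to convert the diffusion integrals into $L^2$-norms of the integrand, and ensure that the non-anticipating property is preserved at every stage so that those stochastic integrals are even defined. The linear growth condition (b) is what keeps the iterates in $\mathbb{L}^2_n(0,T)$ (preventing finite-time blow-up of the second moment), while the Lipschitz condition (a) drives both the contraction estimate and uniqueness; conditions (c)–(d) on $\textbf{X}_0$ are exactly what is needed to start the induction. Since this is the standard Cauchy–Lipschitz theorem for SDEs, I would simply cite \cite{LCE} for the full details rather than reproduce the computations.
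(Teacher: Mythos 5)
Your outline is the standard Picard--Lindel\"of iteration with It\^o isometry and Gronwall estimates, which is precisely the argument in the reference \cite{LCE} that the paper itself invokes; the paper states Theorem \ref{GCL} without proof and simply defers to that source. Your sketch is correct and consistent with the paper's (implicit) approach, so there is nothing to reconcile.
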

\noindent The above theorem is merely adapted in our study case.
\begin{thm}
Let $\textbf{X}_0 = (S_0^*, I_0^*, R_0^*)$ be any $\mathbb R^3$-valued random variable such that:
\begin{itemize}
\item[(i)] $E(\mid \textbf{X}_0\mid^2)<\infty$, and
\item[(ii)] $\textbf{X}_0$ is independent of  $\mathcal W^+(0)$,
\end{itemize}
There exists a unique solution of the following stochastic differential equation:
\begin{equation}\label{stoeq}
\left\{ 
\begin{array}{ll} 
dS=-\beta IS dt - \sigma_1 IS dW_1\\[3mm]
dI= (\beta IS-\gamma I )dt+ \sigma_2  ISdW_2\\[3mm]
\frac{d{ R}}{dt}=\gamma I\\ [3mm]
S(0)=S_0^*, \ I(0)=I_0^*, \ R(0)=R_0^*
\end{array}\right.
\end{equation}
\end{thm}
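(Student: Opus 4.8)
The plan is to bring the system into the vector form of Theorem~\ref{GCL} and then repair the fact that its coefficients are only \emph{locally} Lipschitz by a truncation (localization) argument, the positivity and the balance structure of the SIR dynamics being used at the very end to exclude explosion. Writing $\textbf{X}=(S,I,R)$ and $\textbf{W}=(W_1,W_2)$, the system reads $d\textbf{X}=\textbf{b}(\textbf{X})\,dt+\textbf{B}(\textbf{X})\,d\textbf{W}$ with
\[
\textbf{b}(S,I,R)=\bigl(-\beta IS,\ \beta IS-\gamma I,\ \gamma I\bigr),\qquad
\textbf{B}(S,I,R)=\begin{pmatrix}-\sigma_1 IS & 0\\ 0 & \sigma_2 IS\\ 0 & 0\end{pmatrix}.
\]
Both maps are $C^{\infty}$, hence locally Lipschitz, but because of the bilinear term $IS$ they satisfy neither the global Lipschitz bound (a) nor the linear-growth bound (b) of Theorem~\ref{GCL} on all of $\mathbb{R}^{3}$, so that theorem cannot be invoked as it stands. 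Note also that the third line is a pathwise ODE, $R(t)=R_{0}^{*}+\gamma\int_{0}^{t}I(s)\,ds$, so it is enough to control the two-dimensional subsystem $(S,I)$.

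First I would localize. For each integer $k\ge 1$ choose $\varphi_{k}\in C_{c}^{\infty}(\mathbb{R}^{3})$ with $0\le\varphi_{k}\le1$, $\varphi_{k}\equiv1$ on $\{|x|\le k\}$ and $\varphi_{k}\equiv0$ off $\{|x|\le k+1\}$, and put $\textbf{b}_{k}=\varphi_{k}\textbf{b}$, $\textbf{B}_{k}=\varphi_{k}\textbf{B}$. These truncated coefficients are globally Lipschitz and bounded, hence verify (a)--(b); the datum $\textbf{X}_{0}=(S_{0}^{*},I_{0}^{*},R_{0}^{*})$ verifies (c)--(d) by hypotheses (i)--(ii); so Theorem~\ref{GCL} produces a unique $\textbf{X}_{k}\in\mathbb{L}^{2}_{3}(0,T)$ solving the $k$-th truncated equation. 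Introduce the stopping times $\tau_{k}=\inf\{t\in[0,T]:|\textbf{X}_{k}(t)|\ge k\}\wedge T$. On $[0,\tau_{k}]$ the $k$-th and $(k+1)$-st truncated equations coincide, so the uniqueness part of Theorem~\ref{GCL} gives $\textbf{X}_{k+1}\equiv\textbf{X}_{k}$ on $[0,\tau_{k}]$; in particular $(\tau_{k})_{k}$ is nondecreasing. Set $\tau_{\infty}=\lim_{k}\tau_{k}$ and define $\textbf{X}$ on $[0,\tau_{\infty})$ by $\textbf{X}=\textbf{X}_{k}$ on $[0,\tau_{k}]$. By construction $\textbf{X}$ solves \eqref{stoeq2} up to $\tau_{\infty}$, and since any solution must agree with $\textbf{X}_{k}$ on each $[0,\tau_{k}]$, it is unique on $[0,\tau_{\infty})$.

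The remaining and most delicate point, which I expect to be the real obstacle, is non-explosion: $\tau_{\infty}=T$ almost surely. Here I would exploit the SIR structure. Each term of the $S$-line carries the factor $S$ and each term of the $I$-line the factor $I$, so $S$ and $I$ admit multiplicative (Dol\'eans--Dade) representations, e.g. $S(t)=S_{0}^{*}\exp\!\Bigl(-\int_{0}^{t}\bigl(\beta I+\tfrac12\sigma_{1}^{2}I^{2}\bigr)\,ds-\int_{0}^{t}\sigma_{1}I\,dW_{1}\Bigr)$ and analogously for $I$; consequently, starting from $S_{0}^{*},I_{0}^{*},R_{0}^{*}>0$ the solution stays in the positive octant (rigorously, one proves this for the truncated equations by a comparison argument showing the coordinate hyperplanes are not attained, then passes to the limit). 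Once positivity is known, $d(S+I)=-\gamma I\,dt-\sigma_{1}IS\,dW_{1}+\sigma_{2}IS\,dW_{2}$, and the balance relation \eqref{bal} cancels the two martingale parts, so $S+I$ is non-increasing; hence $0\le S(t),I(t)\le S_{0}^{*}+I_{0}^{*}$ and then $0\le R(t)\le R_{0}^{*}+\gamma T(S_{0}^{*}+I_{0}^{*})$ for all $t<\tau_{\infty}$. Thus $|\textbf{X}|$ stays below an (a.s. finite) random constant, which forces $\tau_{k}=T$ for $k$ large, i.e. $\tau_{\infty}=T$ a.s.; the same bound yields $E\bigl[\sup_{t\le T}|\textbf{X}(t)|^{2}\bigr]<\infty$, so $\textbf{X}\in\mathbb{L}^{2}_{3}(0,T)$. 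If one does not wish to impose \eqref{bal}, the martingale part of $S+I$ survives and this clean monotonicity is lost; the substitute is an It\^o/Gr\"onwall estimate for a Lyapunov function of the type $V(S,I)=S+I-\log S-\log I$, which blows up both at infinity and on the boundary of the octant, and verifying $\mathcal{L}V\le C(1+V)$ together with the required integrability of $V(\textbf{X}_{0})$ is exactly where the work lies. In either case, combining the localized construction of $\textbf{X}$, its uniqueness on $[0,\tau_{\infty})$, and $\tau_{\infty}=T$ a.s. gives the unique global solution on $[0,T]$.
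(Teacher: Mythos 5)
Your route is genuinely different from the paper's, and it is the more careful one. The paper proves the theorem by writing the system as $d\textbf{X}=\textbf{b}(\textbf{X})\,dt+\textbf{B}(\textbf{X})\,d\textbf{W}$ and directly checking hypotheses (a)--(b) of Theorem~\ref{GCL}: it computes the Lipschitz bound for the \emph{matrix} of coefficients rather than for the drift vector itself (whose entries $\pm\beta IS$ are bilinear in the state), and it asserts $\|\textbf{B}(\textbf{X})\|\leq C_0(1+\|\textbf{X}\|)$ even though $|\sigma_i IS|$ grows quadratically. You correctly observe that the terms in $IS$ violate both the global Lipschitz and the linear-growth conditions, so Theorem~\ref{GCL} cannot be applied on all of $\mathbb{R}^3$ as the paper does, and you repair this by the standard localization (smoothly truncated coefficients, consistency of the local solutions via uniqueness, definition of the solution up to the explosion time) followed by a non-explosion argument that genuinely uses the SIR structure: the multiplicative form of the $S$- and $I$-equations preserves positivity, and under the balance relation \eqref{bal} the martingale parts of $d(S+I+R)$ cancel, so the state stays in a bounded region and the explosion time equals $T$ a.s. What your approach buys is a proof that is actually valid for these quadratic coefficients; what the paper's buys is brevity, at the price of a verification that does not hold as written.

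Two caveats on your side. First, the theorem as stated admits an arbitrary $\mathbb{R}^3$-valued $\textbf{X}_0$, while your boundedness argument needs $S_0^*,I_0^*\geq 0$ (without positivity, monotonicity of $S+I$ bounds neither coordinate separately); so either the statement should be read with nonnegative data summing to $N$, as the model intends, or the non-explosion step needs the Lyapunov route. Second, that Lyapunov alternative ($V=S+I-\log S-\log I$, to be used when \eqref{bal} is not imposed) is only sketched --- you acknowledge that checking $\mathcal{L}V\leq C(1+V)$ is ``where the work lies'' --- so in that generality your argument is a plan rather than a proof. Within the regime the paper actually works in (nonnegative initial data with \eqref{bal} in force), your argument is complete and correct.
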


\begin{proof}
The proof is simple. It suffices only to verify if the hypotheses of  Theorem \ref{GCL} are satisfied. 
$$ \left(\begin{array}{ccc}
                                     dS \\
                                     
                                     dI \\
                                     
                                     dR\\
                                   \end{array}
                                 \right)   =                                  
                             \left(\begin{array}{ccc}
                                     -\beta I & 0& 0\\
                                     \beta I & -\gamma&0\\
                                     0        & \gamma & 0\\
                              
                                   \end{array}
                             \right)       
                             \left(\begin{array}{ccc}
                                     S \\
                                     
                                     I \\
                                     
                                     R\\
                                   \end{array}
                                 \right)  dt                                  
     +     \left(\begin{array}{ccc}
                                     -\sigma_1IS \\
               \sigma_2IS \\
                                     
                                     0\\
                                   \end{array}\right) d\textbf{W}                                
     $$
Our system fits well to the one considered in the above general theorem.
with:                                
 $\textbf{X}= \left(\begin{array}{ccc}
                                     S \\
                                   
                                     I \\
                                     
                                     R\\
                                   \end{array}
                                 \right)  $,      \ $\textbf{W} =\left(\begin{array}{ccc}
                                     W_1 \\
                                     
                                       W_2 \\ 
                                     
                                     0
                                   \end{array}
                                 \right)$,  \ $ \textbf{b}(\textbf{X})=  
                            \left(\begin{array}{ccc}
                                     -\beta I & 0& 0\\
                                     \beta I & -\gamma&0\\
                                     0        & \gamma & 0\\
                              
                                   \end{array}
                             \right)  $     \; and    $\textbf{B}(\textbf{X})= \left(\begin{array}{ccc}
                                   -  \sigma_1 IS \\
                               
                                     \sigma_2 IS \\
                                  
                                     0\\
                                   \end{array}
                                 \right)  $. 
                                 
\noindent By chosen      $\hat X= \left(\begin{array}{ccc}
                                     \hat S \\
                                     
                                     \hat I \\
                                     
                                    \hat R\\
                                   \end{array}
                                 \right)   $, we have:  
                                                  
  $ \textbf{b}(X)-\textbf{b}(\hat X)=  
                            \left(\begin{array}{ccc}
                                     -\beta (I-\hat I) & 0& 0\\
                                     \beta (I-\hat I) & 0&0\\
                                     0        & 0 & 0\\
                              
                                   \end{array}
                             \right)   \Longrightarrow \mid\mid  b(X)-b(\hat X)\mid\mid_\infty=\beta|I-\hat I| $                                                              
                     
 $X-\hat X= \left(\begin{array}{ccc}
                                    S- \hat S \\
                                     
                                  I-   \hat I \\
                                     
                                  R-  \hat R\\
                                   \end{array}
                                 \right) \Longrightarrow \mid\mid  X-\hat X\mid\mid_\infty=\max(|S-\hat S|, |I-\hat I| ,|R-\hat R|)  $

 $$|I-\hat I|\leq \max(|S-\hat S|,|I-\hat I|, |R-\hat R|)$$
 $  \Longrightarrow \mid\mid  b(X)-b(\hat X)\mid\mid_\infty=\beta|I-\hat I| \leq \beta \mid\mid  X-\hat X\mid\mid_\infty$. We have $\beta(1+\mid\mid X\mid\mid) \geq 0$ 
$$\mid\mid b(X)\mid\mid_\infty=\max(\beta I, \beta I+\gamma,\gamma)=\beta I+\gamma  \quad \mbox{and} \quad \mid\mid X\mid\mid_\infty=\max(S,I,R)$$
$\beta I\leq\beta \max(S,I,R)=\beta\mid\mid X\mid\mid $ \,or $\gamma<\beta  \Longrightarrow$
$$  \mid\mid b(X)\mid\mid_\infty=\beta I+\gamma\leq \beta+\beta\mid\mid X\mid\mid=\beta(1+\mid\mid X\mid\mid)$$ 
$$\displaystyle \mid\mid B\mid\mid=\sup_{\|X\|\leq 1}\frac{|B(X)|}{\|X\|}\leq C_0 +C_0  \max(S,I,R)= C_0 (1+\mid\mid X\mid\mid)$$
where $C_0= \max\{\sigma_1, \sigma_2\} $. 
We see that it suffices to take $L= \max\{\beta, C_0 \}$
\end{proof}

\subsection{Stopping time}\label{stopping}
Let $(\Omega, \mathcal U, \mathbb{P})$ be a probability space and $\mathcal F(.)$ a filtration of $\sigma-$algebras.
\begin{defi}
A random variable $\tau: \Omega \Longrightarrow [0, +\infty]$ is called  a stopping time  with respect to $\mathcal F (.)$ if  the set 
\begin{eqnarray*}
\{\tau \leq t\}, \, \, \mbox{for all}\, \, t \geq 0.
\end{eqnarray*}
\end{defi}
\noindent This means  that the set of all $\omega \in \Omega$ such that $\tau (\omega)\leq t $ is an $\mathcal F(t)-$ measurable set.
For our  model, it could be interesting to study the stopping time. For instance, the situation in which efforts are done to contain or eradicate the pandemic. And an interesting  one  is  can we find a finite stopping time when  the  susceptible infected and recovered or removed   variables   $S(t),$ $I(t)$ and  $R(t)$  respectively   are  less  than thresholds?\\
Let us state a general theorem  where the stopping time does exist but it may be  taken $+\infty.$

\begin{thm}
Let $V$ be either a nonempty closed subset or a nonempty open subset of $\mathbb{R}^3$. Then 
\begin{eqnarray}
\tau:= \inf\{t\geq 0 / (S(t), I(t), R (t)) \in V\}
\end{eqnarray}
is a stopping time
\end{thm}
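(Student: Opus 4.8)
The strategy is to reduce the event $\{\tau\le t\}$ to a countable combination of events of the form $\{X(s)\in(\text{something elementary})\}$ with $s\le t$, where $X(\cdot):=(S(\cdot),I(\cdot),R(\cdot))$. Two inputs make this possible. First, the solution furnished by the existence and uniqueness theorem of the preceding subsection is adapted to $\mathcal F(\cdot)$ and has continuous sample paths (the first two components are It\^o processes and $R(t)=R_0^*+\gamma\int_0^t I(u)\,du$ is absolutely continuous). Second, the map $g(x):=\operatorname{dist}(x,V)$ is $1$-Lipschitz, hence continuous, on $\mathbb R^3$. I would treat the closed and the open case separately; the closed case goes through directly with $\mathcal F(\cdot)$ as defined, while the open case needs one extra remark.

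\textbf{Closed case.} Assume $V$ is closed and nonempty, so that $x\in V\iff g(x)=0$. Fix $t\ge 0$ and set $m(t):=\inf_{0\le s\le t}g(X(s))$. Since $s\mapsto g(X(s))$ is continuous on the compact interval $[0,t]$, this infimum is attained and equals $\inf_{s\in[0,t]\cap\mathbb Q}g(X(s))$; hence $\{\tau\le t\}=\{m(t)=0\}$. Writing
\[
\{m(t)=0\}=\bigcap_{n\ge 1}\ \bigcup_{s\in[0,t]\cap\mathbb Q}\{\,g(X(s))<1/n\,\}
\]
and observing that each $\{g(X(s))<1/n\}$ belongs to $\mathcal F(s)\subseteq\mathcal F(t)$ (continuity of $g$ together with adaptedness of $X$), we obtain $\{\tau\le t\}\in\mathcal F(t)$, which is exactly the stopping-time property.

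\textbf{Open case and the main obstacle.} Assume $V$ is open and nonempty. I would first establish the identity $\{\tau<t\}=\bigcup_{s\in[0,t)\cap\mathbb Q}\{X(s)\in V\}$: the inclusion ``$\supseteq$'' is immediate, and for ``$\subseteq$'', if $\tau(\omega)<t$ there is $s<t$ with $X(s,\omega)\in V$, and since $V$ is open and $u\mapsto X(u,\omega)$ is continuous, $X(u,\omega)\in V$ for all $u$ in a neighbourhood of $s$, which contains a rational point still $<t$. Thus $\{\tau<t\}\in\mathcal F(t)$ for every $t$, and consequently $\{\tau\le t\}=\bigcap_{n\ge 1}\{\tau<t+1/n\}\in\bigcap_{n\ge 1}\mathcal F(t+1/n)$. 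This is the delicate point: the argument only shows that $\tau$ is an \emph{optional} time, i.e.\ $\{\tau\le t\}\in\mathcal F(t+)$. To upgrade it to a genuine stopping time one invokes right-continuity of the filtration, $\mathcal F(t+)=\mathcal F(t)$; this is harmless here, since one may replace $\mathcal F(\cdot)$ by its augmented right-continuous version without affecting the SDE solution or any of the earlier statements (alternatively, one states the open-set case at the level of optional times). In both cases the whole argument rests on the path continuity of the solution supplied by Theorem \ref{GCL}.
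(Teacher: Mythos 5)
The paper states this theorem without any proof at all --- it is presented as a ``general theorem'' imported from the stochastic-calculus literature (Evans' book, cited as \cite{LCE}), and the text moves directly on to remarks about properties of stopping times. So there is nothing in the paper to compare your argument against; what you have written is the standard textbook proof, and it is correct. Your closed case is complete: path continuity of $(S,I,R)$ (the first two components are It\^o processes, the third an integral of $I$), adaptedness of the solution, the $1$-Lipschitz distance function $g(x)=\operatorname{dist}(x,V)$, and the reduction of $\{\tau\le t\}$ to a countable intersection of countable unions of events in $\mathcal F(t)$ all fit together as stated. Your open case is also handled correctly, and the caveat you raise is a genuine one that the paper's bare statement silently skates over: the identity $\{\tau<t\}=\bigcup_{s\in[0,t)\cap\mathbb Q}\{X(s)\in V\}$ only shows $\{\tau\le t\}\in\mathcal F(t+)$, i.e.\ that $\tau$ is an optional time, and promoting it to a stopping time requires right-continuity of the filtration (or passing to the usual augmentation, which is harmless for the SDE). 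Since the paper's definition of a filtration imposes no right-continuity, the theorem as literally stated is only true for open $V$ under that additional convention --- a point worth making explicit, as you do.
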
 
\noindent Let us point out that  the stopping time has interesting properties as a remark. 
Let $\tau_1$ and $\tau_2$ be stopping times with respect to $\mathcal F (.),$ then:
\begin{itemize}
 \item $\{\tau < t\} \in \mathcal F(t),$ and so $ \{\tau = t\} \in \mathcal F(t)$ for all times $t\geq 0.$
\item  $\tau_1\wedge \tau_2:= \min (\tau_1, \tau_2), \tau_1 \vee \tau_2: =\max (\tau_1, \tau_2) $ are stopping times
\end{itemize}
In the numerical simulations, we shall wonder if it is possible to find a finite stopping time by considering 
 $V$ a closed  subset of $\mathbb{R}^3$ in the form of $[0, S^*] \times [0, I^*]\times[0, R^*].$
 
\section{Parameter estimation and numerical simulations}\label{numsim}
In this section, we present the simulations of the simple SIR, stochastic SIR, SIR with Deaths, SIR with Fatal, SIR Exposed and Waiting cases with Fatal and the machine learning technic for forecasting of the pandemic. 
The numerical tests were performed by using the Python with the Panda library 
\cite{python}. The numerical experiments were executed on a computer with the following characteristics: intel(R) Core-i7 CPU 2.60GHz, 24.0Gb of RAM, under the UNIX system.

\subsection{Exploratory data analysis}\label{eda}
As stated in the introduction, the simulations are carried out from  data in \cite{datahub}, from January 21 2020, to April 19, 2020. 
We first analyze and make some data preprocessing before simulations. It is a good practice to know data types, along with finding whether columns contain null values or not.\\
We get various summary statistics, by giving the count (number of observations), mean, standard deviation, minimum and maximum values, and the quantiles of the data (see tables \ref{futurconfcases} and \ref{totdat}). 
\begin{table}[h!] 
\begin{center}
\begin{tabular}{|c|c|c|c| } 
 \hline
        {\bf Values }        &        {\bf Confirmed} &  {\bf Deaths } & {\bf  Recovered} \\
           \hline
count  & 16729 &  16729 & 16729 \\
           \hline
mean    & 	2532.4691 & 143.0855 &  	624.3309\\
           \hline
std &  	13183.2366  & 	1140.9791 &	4818.1358 \\
           \hline
min   &	0 & 	0   &	0\\
           \hline
25\%   & 	8   &	0  & 	0 \\
           \hline
50\% &	85   &	1  & 	1  \\
           \hline
75\% 	& 576 & 	6  & 	52 \\
           \hline
max  &	247815  & 	23660   &	88000 \\
\hline
\end{tabular}
\end{center}
\caption{Worldwide summary statistics (per day) until April 19th, 2020.}\label{futurconfcases}
\end{table}
\par\vspace{-.75cm}
\begin{table}[h!] 
\begin{center}
\begin{tabular}{|c|c|c|c| c| } 
 \hline
{\bf Date} & {\bf Confirmed}   &{\bf Infected}  & {\bf Deaths} &{\bf Recovered}  \\	
\hline
2020-04-17 	&  2240191 	  &  1518026   &	153822   &  	568343 \\
\hline
2020-04-18  & 	2317759 	&1565930 &	159510 &	592319 	\\
\hline
2020-04-19 &	2401379 &	1612432 	&165044 &	623903 \\
\hline
\end{tabular}
\end{center}
\caption{Worldwide total data until April 19th, 2020.}\label{totdat}
\end{table}

\noindent The worldwide cumulative of confirmed, deaths and recovered cases are illustrated in Figure \ref{ww_cidr}.
\begin{figure}[h!]
	\centering
	\includegraphics[width=1.\linewidth]{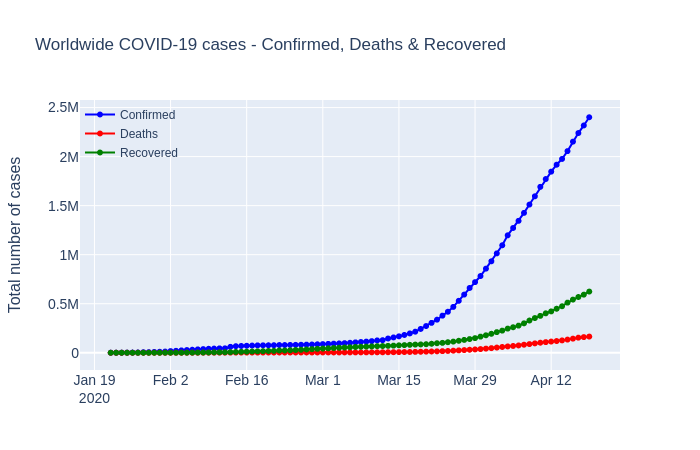}
		\par\vspace{-1.cm}
	\caption{Worldwide - confirmed, deaths and recovered} \label{ww_cidr}
\end{figure}

\subsection{Parameter estimation}
The identification of a real dynamical system (called object) is to characterize another system (called model), starting from the experimental knowledge of the inputs and outputs to obtain  an identity of behavior. In practice, the purpose of identification is generally to determine the conducted model, which can be used to simulate, control, or regulate a process. This model can be physical (in the sense of analog or digital simulator and reduced model), or  abstract (mathematical model, i.e. system of algebraic or differential equations (ODE or PDE)). \\
This subsection is started by the estimation of the parameters $\beta, \gamma$  in the deterministic SIR model by the standard least square method. 

\subsubsection{Deterministic case}
\begin{equation}\label{det}
\left\{ 
\begin{array}{ll}
\frac{dS}{dt}=-\beta IS\\[3mm]
\frac{dI}{dt}= \beta IS-\gamma I\\[3mm]
\frac{dR}{dt}=\gamma I
\end{array}\right.
\end{equation} 
\noindent Let us consider  a time interval $[0, T]$ and subdivide it as follows:
$\forall  i\in\{0,..., n-1\}$, $t_{i+1}- t_i =1$,  $t_0= 0$, $t_{n-1}= T$. 
Let: $\beta (t_i)= \beta_i, \gamma (t_i)= \gamma_i, S (t_i)= S_i$, and $I(t_i)= I_i$.\\
Discretizing $(\ref{det}),$ we get the following system of $2n$ unkown ($\forall i= 0,..., n-1$):
\begin{equation}\label{det2}
\left\{ 
\begin{array}{ll}
S_{i+1}- S_i =-\beta_i I_i S_i\\[3mm]
I_{i+1}- I_i= \beta I_i S_i-\gamma_i I_i\\[3mm]
R_{i+1}- R_i=\gamma_i I_i
\end{array}\right.
\end{equation} 
\noindent Let us set: $$Y= (S_1,,..., S_n, I_1,..., I_n) \in \mathbb{R}^{2n}\quad \mbox{and} \quad F (\beta_0,..., \beta_{n-1}, \gamma_0,..., \gamma_{n-1})\in \mathbb{R}^{2n} $$ the vector coming from the right hand side of the above two first equations of $(\ref{det2}).$
 $$P:= (\beta_0, \beta_1,..., \beta_{n-1}, \gamma_0, \gamma_1,..., \gamma_{n-1})\in [0, 1]^{2n}. $$
We try to minimize:
 $$\mathcal E (P):= \|Y-F(P)\|_2^2$$ on  $ [0, 1]^{2n} $, where $\|.\|_2$ is the Euclidean norm in  $ \mathbb{R}^{2n}.$\\
The differentiability  of $\mathcal E$ and  its convex structure ensure existence of a minimizer that we note by $P_{opt} = (\beta_0^*, \beta_1^*,..., \beta_{n-1}^*, \gamma_0^*, \gamma_1^*,..., \gamma_{n-1}^*).$\\
 And the approximated parameters proposed are:
 \begin{eqnarray*}
 \beta_{app}= \frac{1}{n}\sum_{i=0}^{n-1} \beta_i^* \qquad \mbox{and}\qquad
 \gamma_{app}= \frac{1}{n}\sum_{i=0}^{n-1} \gamma_i^*
 \end{eqnarray*}
We have:
\begin{itemize}
\item $S$ : Susceptible (= $N$ - Confirmed)
\item $I$ : Infected (= Confirmed - Recovered - Deaths)
\item $R$ : Recovered (= Recovered + Deaths)
\item $S+I+R=N$, where $N$ is the total population that can be obtained in \cite{pyramid}.
\end{itemize}
\noindent The basic reproduction number (also called basic reproduction ratio) is defined as  is $R_0 = \beta \gamma^{-1}$. 
This ratio is derived as the expected number of new infections (these new infections are sometimes called secondary infections) from a single infection in a population where all subjects are susceptible.
For the model, we also have:
$$ S+I \overset{\beta}{\longrightarrow} 2I \qquad  \mbox{and}\qquad  I \overset{\gamma}{\longrightarrow} R $$ 
with: $\beta$= effective contact rate [1/day], \ 
$\gamma$= recovery(+mortality) rate [1/day].
\begin{rem}
An easy way to compute $\gamma_i$, is to use the equation $R_{i+1}- R_i=\gamma_i I_i$, and set $y_i=R_{i+1}- R_i$, $x_i=I_i$\quad ($\forall i=1,...,n-1$). Then, we obtain $y_i=f(x_i)$. \\
Finally, as  $\gamma_i$ are bounded $\forall i=1,...,n$, we can call the {\tt curve\_fit} procedure. The {\tt scipy.optimize.curve\_fit} use non-linear least squares to fit a function, $f$, to data, assuming {\tt ydata = f(xdata, *params) + $\epsilon$}.\\
The return value {\tt popt} contains the best-fit values of the parameters. The return value {\tt pcov} contains the covariance (error) matrix for the fit parameters. From them, we can determine the standard deviations of the parameters. We can also determine the correlation between the fit parameters.
\end{rem}
\noindent To estimate $\beta$, we use the same procedure with the first equation $y_i = S_i-S_{i+1} =\beta_i I_i S_i=\beta_i x_i, \quad (\forall i=1,...,n-1)$. Recall that $y_i\geq 0$, and $x_i=I_i S_i$.

 \subsubsection{Volatility rates}
 To approximate the volatility rates $\sigma_1, \sigma_2,$ we propose the  standard deviation. And we need to compute the variances of  the distributions obtained in the  deterministic estimations in the previous sub sub section.
 \begin{eqnarray*}
 V_1= \frac{1}{n}\sum_{i=0}^{n-1}(\beta_i^*- \beta_{app})^2 \qquad  \mbox{and}\qquad
 V_2=  \frac{1}{n}\sum_{i=0}^{n-1}(\gamma_i^*- \gamma_{app})^2
 \end{eqnarray*}
 And first idea for   approximating   volatilities are:
 $$\sigma_1= \sqrt{V_1}, \qquad \sigma_2= \sqrt{V_2}.$$
But  in the numerical simulation section we have  to take into account the equilibrium condition $(\ref{bal})$ that  derives from the modeling.\\
A second idea  that could be better, is to take: 
 $$\sigma_i= \frac{\sqrt{V_1}+  \sqrt{V_2}}{2}, \quad i= 1, 2. $$ 
 \begin{rem}
\textbf{Before proceeding further, we would like to underline that the numerical tests that we shall realize below, are to be understood under hypotheses. If nothing is done on time, it could be possible to fall on  the  below predictions. And an invitation is to see how it is possible to organize minimal actions to  mitigate strongly  possible damage caused by the COVID 19 pandemic in a country like Senegal.}
 \end{rem}

\subsection{Simple SIR model}\label{simpleSIR}
First, we use the SIR model simulations for the Senegal case study. For the initial values, we use date already stated in section \ref{eda}. The first 3 days cases are given in Table \ref{senI0}. 
We have: $N=16,296,361$ until the year 2019 (see \cite{pyramid}), $I_0^*$ = Confirmed[0]=1, $S_0^* = N-I_0^*-R_0^*$. \\
To estimate parameters $\beta$ and $\gamma$, we call the procedure {\tt Optuma} package \cite{optuna} with python, an open source hyper-parameter optimization framework to automate hyper-parameter search.
For the contact rate ($\beta$) and the mean recovery rate ($\gamma$), we have: $\beta=0.115953$, $\gamma=0.030912$ (1/[days]) and $R_0=3.75$.  The prediction is given in Figure \ref{fig_sir}.
\par\vspace{-.5cm}
\begin{table}[h!] 
\begin{center}
\begin{tabular}{|c|c|c|c| } 
 \hline
          Date  &     Confirmed  &    Infected &    Deaths  \\
           \hline
2020-03-02  &	1 	&  1  &	0 \\
2020-03-04  &	2 	&  2  &	0 \\
2020-03-05  &	4 	&  4  &	0 \\
\hline
\end{tabular}
\end{center}
\caption{Senegal: first three days}\label{senI0}
\end{table}
\par\vspace{-1.cm}
\begin{center}
\begin{figure}[h!]
	\centering
	\includegraphics[width=.9\linewidth]{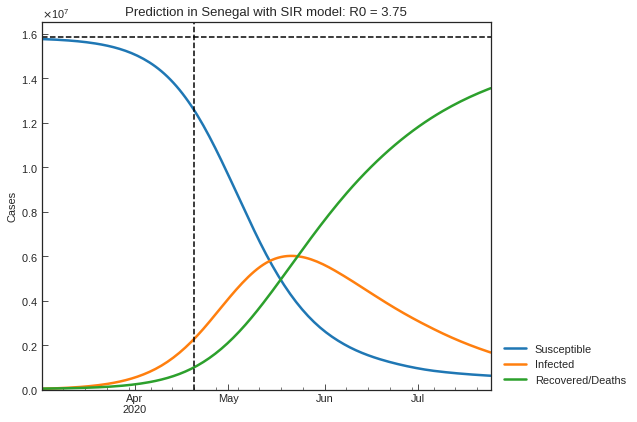}
	\caption{Senegal: prediction with SIR}\label{fig_sir}
\end{figure}
\end{center}
\par\vspace{-.5cm}

\subsection{Stochastic SIR model }
In this stochastic part, the following relation is to be considered
 \begin{eqnarray}
 -\sigma_1 dW_1+  \sigma_2 dW_2= 0
 \end{eqnarray}
 with  $dW = Z\sqrt{\Delta t}$, and  $Z \backsim \mathcal{N} (0 ,1 )$.\\
 For  the numerical simulation of the stopping time, we consider, $V$ a closed  subset of $\mathbb{R}^3$ in the form of $[0, S^*] \times [0, I^*]\times[0, R^*].$\\
We consider the same procedure like in section \ref{simpleSIR} with the same population and the same initialization, $I_0^*$ = Confirmed[0]=1, $S_0^* = N-I_0^*-R_0^*$. \\
To estimate parameters $\beta$ and $\gamma$, we call the procedure {\tt scipy.optimize.curve\_fit} with python.
For the contact rate ($\beta$) and the mean recovery rate ($\gamma$), we have: $\beta=0.135005$, $\gamma=0.026979$ (1/[days]), $\sigma_2=0.036254$ and $R_0=5$. As  $\sigma_1 dW_1 =  \sigma_2 dW_2$, we only use $\sigma_2 dW_2$ in the simulations. \\
Predictions with different simulations (because of Brownian) are given in Figure \ref{sn_ssir}. 
{\bf For the Brownian, the curve changes for each simulation}. For 4 tests, this gives us the results of Figures \ref{ssir_1}, \ref{ssir_2}, \ref{ssir_3} and \ref{ssir_4}.

\begin{rem}
The stopping time (see section \ref{stopping}) is illustrated in dotted line and appears around the middle April.
\end{rem}
\begin{figure}[h!]
  \subfloat[a first stochastic SIR]{
	\begin{minipage}[1\width]{0.5\textwidth}
	   \centering
	   \includegraphics[width=1.1\textwidth]{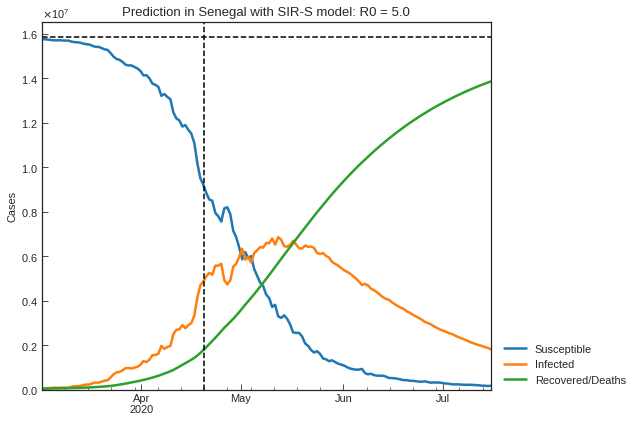}\label{ssir_1}
	\end{minipage}}
  \subfloat[a second stochastic SIR]{
	\begin{minipage}[1\width]{ 0.5\textwidth}
	   \centering
	   \includegraphics[width=1.1\textwidth]{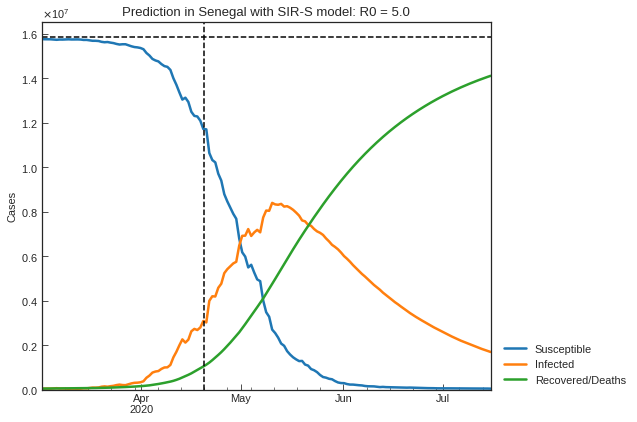}\label{ssir_2}
	\end{minipage}}
\newline
  \subfloat[a third stochastic SIR]{
	\begin{minipage}[1\width]{ 0.5\textwidth}
	   \centering
	   \includegraphics[width=1.1\textwidth]{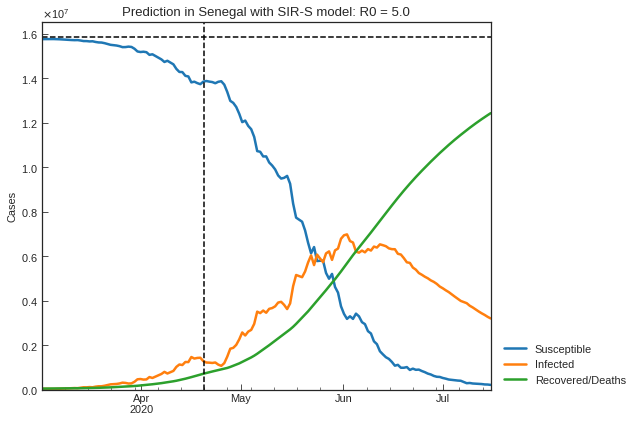}\label{ssir_3}
	\end{minipage}}
%
  \subfloat[a fourth stochastic SIR]{
	\begin{minipage}[1\width]{ 0.5\textwidth}
	   \centering
	   \includegraphics[width=1.1\textwidth]{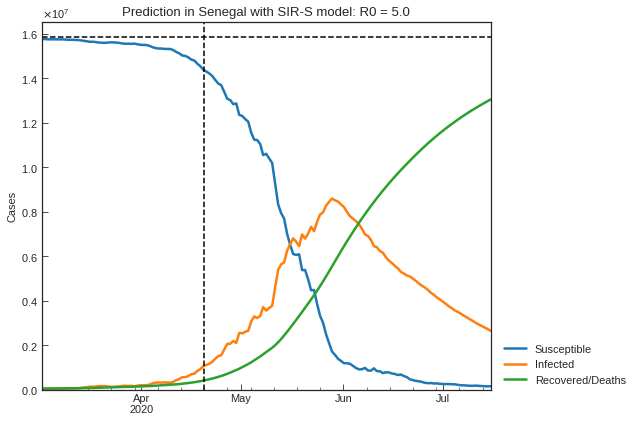}\label{ssir_4}
	\end{minipage}}
	
\caption{Senegal: prediction with stochastic SIR}\label{sn_ssir}
\end{figure}

\subsection{Interpretation of figures}
The values of the parameters having been estimated, the SIR deterministic model (Figure \ref{fig_sir}) shows that the peak of infection could be reached by mid-May with about 37.5\% of the population infected. The same time period of peak (mid-May) for infected, is observed for Figures \ref{fig_sird}, \ref{fig_sirf} and \ref{fig_sewir}. \\
\noindent The stochastic SIR model illustrates that it is possible to have nearly the same peak than the deterministic one (but with a larger infected population) if random factors are not too important (Figures \ref{ssir_1} and \ref{ssir_2}).\\
\noindent In the other hand if the effect of hazard is important, the stochastic SIR model is less optimistic and show that the peak of infection could be reached in early June (Figures \ref{ssir_3} and \ref{ssir_4}) with about 56\% of the population infected.\\
\noindent A finite stopping time exists (in dotted line) and is established in the second half of April.

\subsection{Others modifications of the SIR model}
There are a large number of modifications of the SIR model, including those that include births and deaths separately, where some cases are reported as fatal cases before clinical diagnosis of COVID-19, where the number of exposed cases in latent period and the waiting cases for confirmation are un-measurable variables, etc. 
All models allow for understanding how different situations may affect the outcome of the pandemic.

\subsubsection{SIR with Deaths \cite{diekmann,hethcote,Keeling} (SIR-D)}
It's possible to measure the number of fatal cases and recovered cases separately. We can use two variables Recovered and Deaths, instead of Recovered + Deaths in the mathematical model.\\
The model is given by:
\begin{equation}\label{model_sird}
\left\{ 
\begin{array}{ll}
\frac{dS}{dt}=-\beta IS\\[3mm]
\frac{dI}{dt}= \beta IS-(\gamma+\alpha) I\\[3mm]
\frac{dR}{dt}=\gamma I \\[3mm]
\frac{dD}{dt}=\alpha I
\end{array}\right.
\end{equation} 
We have:\\
$S$ : Susceptible, $I$ : Infected, $R$ : Recovered, $D$ : Fatal. \\
In addition, $S+I+R+D=N$, where $N$ is the total population, always obtained from \cite{pyramid}.
\noindent The basic reproduction number (also called basic reproduction ratio) is defined as  is $R_0 = \beta (\gamma+\alpha)^{-1}$. 
For the model, we also have:
$$ S+I \overset{\beta}{\longrightarrow} 2I, \ I \overset{\gamma}{\longrightarrow} R \ \  \mbox{and}\ \  I \overset{\alpha}{\longrightarrow} D  $$ 
with: $\beta$= effective contact rate [1/day], \ 
$\gamma$= recovery rate [1/day], $\alpha$= mortality rate [1/day]. 
We also use the optuna package to estimate the parameters. \\
We obtain: $\beta=0.002343$, $\gamma=0.042431$, $\alpha=0.008784$ (1/[days]) and $R_0=3.81$.  The prediction is given in Figure \ref{fig_sird}.
\begin{center}
\begin{figure}[h!]
	\centering
	\includegraphics[width=.9\linewidth]{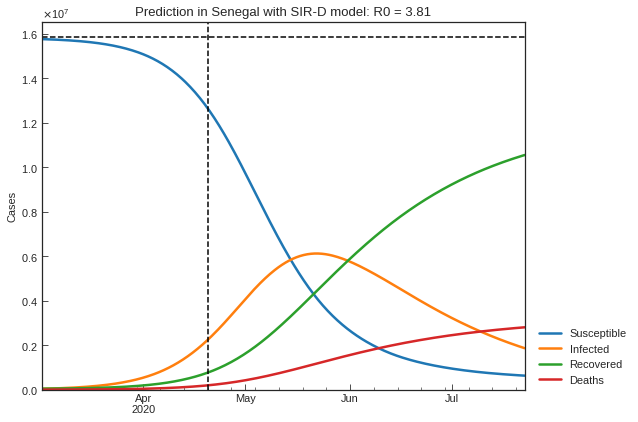}
	\caption{Senegal: prediction with SIR with Deaths}\label{fig_sird}
\end{figure}
\end{center}
\par\vspace{-.5cm}
\subsubsection{SIR with Fatal \cite{diekmann,hethcote,Keeling} (SIR-F)}
We can have a situation where some cases are reported as fatal cases before clinical diagnosis of COVID-19. To consider this issue, S + I $\longrightarrow$  
  Fatal + I will be added to the model.\\
The model is given by:
\begin{equation}\label{model_sirdf}
\left\{ 
\begin{array}{ll}
\frac{dS}{dt}=-\beta IS\\[3mm]
\frac{dI}{dt}= (1-\alpha_1) \beta IS-(\gamma+\alpha_2) I\\[3mm]
\frac{dR}{dt}=\gamma I \\[3mm]
\frac{dD}{dt}=\alpha_1 \beta IS + \alpha_2 I
\end{array}\right.
\end{equation} 
We have:\\
$S$ : Susceptible, $S^*$ : Confirmed and un-categorized, $I$ : Confirmed and categorized as $I$, $R$ : Recovered, $F$: Fatal with confirmation.\\ 
In addition, $S+I+R+F=N$, where $N$ is the total population, always obtained from \cite{pyramid}.
\noindent The basic reproduction number (also called basic reproduction ratio) is defined as  is $R_0 = \beta (1-\alpha_1)(\gamma+\alpha)^{-1}$. 
For the model, we also have:
$$ S\overset{\beta I}{\longrightarrow} S^* \overset{\alpha_1}{\longrightarrow} F,\  S\overset{\beta I}{\longrightarrow} S^* \overset{1-\alpha_1}{\longrightarrow} I,  \ I \overset{\gamma}{\longrightarrow} R \ \  \mbox{and}\ \  I \overset{\alpha_2}{\longrightarrow} F  $$ 
with: $\beta$= effective contact rate [1/day], \ 
$\gamma$= recovery rate [1/day], $\alpha_1$= mortality rate of $S^*$ cases [1/day], $\alpha_2$= mortality rate of $I$ cases [1/day].
We also use the Optuna package to estimate the parameters. \\
We obtain: $\beta=0.103144$, $\gamma=0.02308$, $\alpha_1=0.057568$, $\alpha_2=0.000192$ (1/[days]) and $R_0=4.18$.  The prediction is given in Figure \ref{fig_sirf}.
\begin{center}
\begin{figure}[h!]
	\centering
	\includegraphics[width=.9\linewidth]{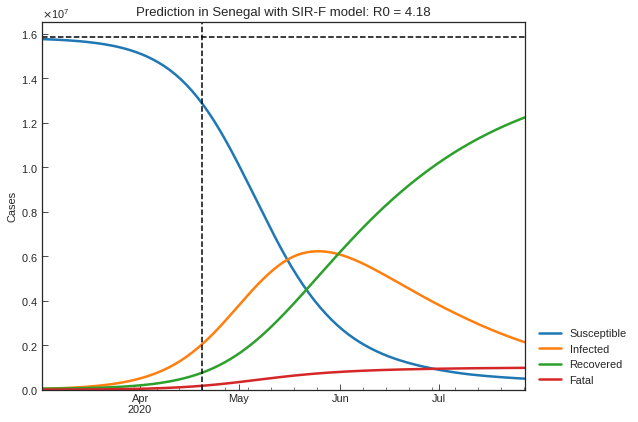}
	\caption{Senegal: prediction with SIR with Fatal}\label{fig_sirf}
\end{figure}
\end{center}
\par\vspace{-.5cm}

\subsubsection{SIR Exposed and Waiting cases with Fatal \cite{diekmann,hethcote,Keeling} (SEWIR-F)}
We can consider the number of exposed cases in latent period (E) and the waiting cases for confirmation (W) which are un-measurable variables. 
If E and W are large, outbreak will occur in the near future. W and some rules were added to explain COVID-19 dataset, but this is like-SEIR model. \\
The model is given by:
\begin{equation}\label{model_sirfew}
\left\{ 
\begin{array}{ll}
\frac{dS}{dt}=-\beta_1 (W+I)S\\[3mm]
\frac{dE}{dt}= \beta_1 (W+I)S-\beta_2 E\\[3mm]
\frac{dW}{dt}= \beta_2 E - \beta_3 W \\[3mm]
\frac{dI}{dt}= (1-\alpha_1) \beta_3 W-(\gamma+\alpha_2) I\\[3mm]
\frac{dR}{dt}=\gamma I \\[3mm]
\frac{dF}{dt}=\alpha_1 \beta_3 W + \alpha_2 I
\end{array}\right.
\end{equation} 
We have:\\
$S$ : Susceptible, $E$ : Exposed and in latent period (without infectivity), $W$ : Waiting cases for confirmation (with infectivity), 
$I$ : Confirmed and categorized as $I$, $R$ : Recovered and $F$: Fatal with confirmation.\\ 
In addition, Total population - Confirmed  = $S+E+W+S^*$, Confirmed = $I+R+F$, Recovered = $R$, Deaths = $F$ and  $S+E+W+I+R+F=N$, where $N$ is the total population, always obtained from \cite{pyramid}.
\noindent The basic reproduction number (also called basic reproduction ratio) is defined as  is $R_0 = \beta_1 (1-\alpha_1)(\gamma+\alpha_2)^{-1}$. 
For the model, we also have:
$$ S\overset{\beta_1 (W+I)}{\longrightarrow} E \overset{\beta_2}{\longrightarrow} W \overset{\beta_3}{\longrightarrow} S^* \overset{\alpha_1}{\longrightarrow} F,\  S  \overset{\beta_1 (W+I)}{\longrightarrow} E \overset{\beta_2}{\longrightarrow} W \overset{\beta_3}{\longrightarrow} S^* \overset{1-\alpha_1}{\longrightarrow} I  
$$
$$
\ I \overset{\gamma}{\longrightarrow} R \ \  \mbox{and}\ \  I \overset{\alpha_2}{\longrightarrow} F  $$ 
with: $\beta_1$= exposure rate (the number of encounter with the virus in a minute) [1/day], $\beta_2$= inverse of latent period [1/day], \ $\beta_3$= inverse of waiting time for confirmation [1/day], \ $\gamma$= recovery rate [1/day], $\alpha_1$= mortality rate of $S^*$ cases [1/day] ($S^*$ = Confirmed and un-categorized), $\alpha_2$= mortality rate of $I$ cases [1/day]. \\
We also use the Optuna package to estimate the parameters. \\ 
To estimate $\beta_2$ and $\beta_3$, we first calculate median value of latent period  $L_E$ and waiting time for confirmation $L_W$. We assume that  latent period is equal to incubation period (patients start to have infectivity from onset dates). \\
We obtain: $\beta_1=0.0.108008$, $\beta_2=0.407707$, $\beta_3=0.728274$, $\gamma=0.018478$, $\alpha_1=0.068890$, $\alpha_2=2.066063e-05$ (1/[days]) and $R_0=5.44$.  The prediction is given in Figure \ref{fig_sewir}.
\begin{center}
\par\vspace{-.5cm}
\begin{figure}[h!]
	\centering
	\includegraphics[width=.9\linewidth]{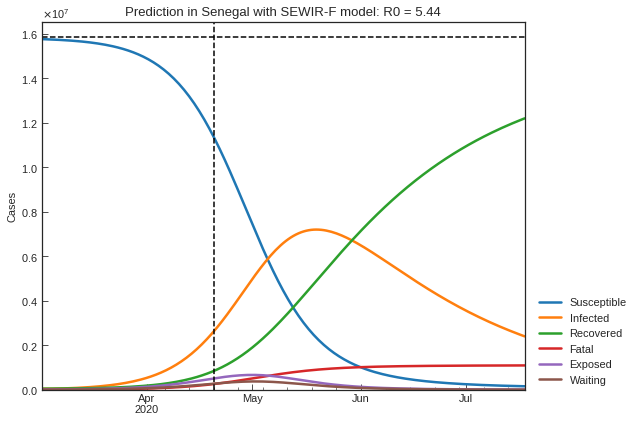}
	\caption{Senegal: prediction with SEWIR with Fatal}\label{fig_sewir}
\end{figure}
\end{center}
\par\vspace{-1.cm}
\begin{rem}
Note that it is quite possible to propose stochastic  variants for  all the above various  deterministic models.
\end{rem}

\subsection{Forecasting using Prophet}
In this section, we develop the machine learning technics for forecasting to compare with the previous SIR models. 
We use Prophet \cite{prophet, sean}, a procedure for forecasting time series data based on an additive model where non-linear trends are fit with yearly, weekly, and daily seasonality, plus holiday effects. \\
For the average method, the forecasts of all future values are equal to the average (or “mean”) of the historical data. If we let the historical data be denoted by $y_1,...,y_T$, then we can write the forecasts as
$$
\hat{y}_{T+h|T}=\bar{y}=(y_1+y_2+...+y_T)/T
$$
The notation $\hat{y}_{T+h|T}$ is a short-hand for the estimate of $y_{T+h}$  based on the data $y_1,...,y_T$. 
A prediction interval gives an interval within which we expect $y_t$  to lie with a specified probability. For example, assuming that the forecast errors are normally distributed, a 95\% prediction interval for the  $h$-step forecast is 
$$
\hat{y}_{T+h|T}\pm1.96\hat{\sigma_h}
$$
where  ${\sigma_h}$ is an estimate of the standard deviation of the $h$-step  forecast distribution.  \\
For the data preparation, when we are forecasting at country level, for small values, it is possible for forecasts to become negative. To counter this, we round negative values to zero. Also, no tweaking of seasonality-related parameters and additional regressors are performed.\\
\noindent We can carry out simulations for a longer time and forecast the potential trends of the COVID-19 pandemic. In Senegal, the predicted cumulative number of confirmed cases are first plotted for a shorter period of the next 7 days, and 3 weeks ahead forecast with Prophet, with 95\% prediction intervals. \\
The confirmed predictions for Senegal are given in Figures \ref{sn_1w} and \ref{sn_2w} (see Tables \ref{sn_1w_confcases} and \ref{sn_2w_confcases} for the value of the confidence interval).
\begin{table}[h!] 
\begin{center}
\begin{tabular}{|c|c|c|c| } 
 \hline
          ds   &         $ \hat{y}$ &    $\hat{y}_{lower}$ &    $\hat{y}_{upper}$ \\      
           \hline
2020-04-22 	& 390.082137 &	381.573661 	&398.870511 \\
\hline 
2020-04-23  & 	399.143136 &	389.013079  &	408.336247  \\ 
\hline
2020-04-24 &	411.338180 	 & 400.868963 	& 422.324577 \\
\hline
2020-04-25 &	420.807645  &	408.116029  &	432.783086\\
\hline
2020-04-26 &	432.420448 &	418.126379   &	446.991749  \\
\hline
\end{tabular}
\end{center}
\caption{Senegal: predicted cumulative confirmed cases 
$\sim$April 26, 2020.}\label{sn_1w_confcases}
\end{table} 
\par\vspace{-1.cm}
\begin{table}[h!] 
\begin{center}
\begin{tabular}{|c|c|c|c|} 
 \hline
      ds  &         $ \hat{y}$ &    $\hat{y}_{lower}$ &    $\hat{y}_{upper}$   \\
           \hline
2020-05-06 &	534.240980 &	483.962306  &	586.061842 \\  
\hline 
2020-05-07 &	543.301979 &	487.616917 	&	600.747825  \\  
\hline
2020-05-08 &	555.497023 &	498.895290 &	616.731785   \\   
\hline
2020-05-09 &	564.966488 &	501.869902 &	631.122912   \\   
\hline
2020-05-10 &	576.579291 &	509.887788 &	647.035185  \\
\hline
\end{tabular}
\end{center}
\caption{Senegal: predicted cumulative confirmed cases $\sim$May 10, 2020.}\label{sn_2w_confcases}
\end{table} 
\begin{figure}[h!]
	\centering
	\includegraphics[width=0.9\linewidth]{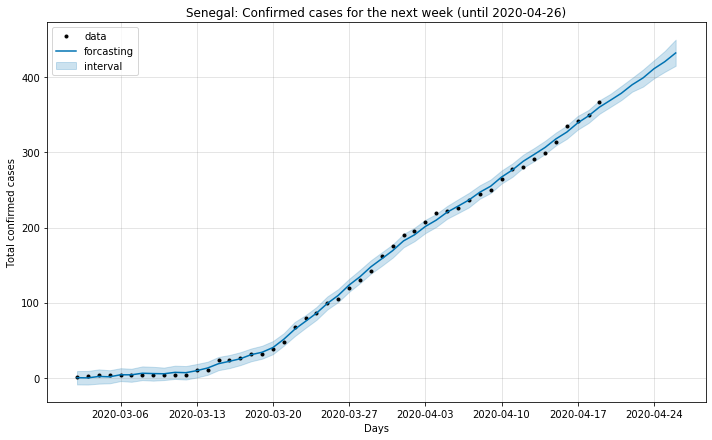}
	\caption{Senegal (Confirmed) : Forcasting for the next week $\sim$April 29, 2020 }  \label{sn_1w}
\end{figure}
\begin{figure}[h!]
	\centering
	\includegraphics[width=0.9\linewidth]{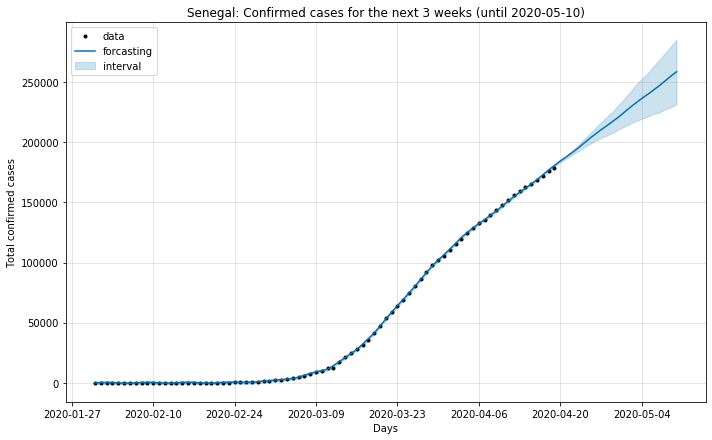}
	\caption{Senegal (Confirmed) : Forcasting for the next 3 weeks $\sim$May 10, 2020}   \label{sn_2w}
\end{figure}

\par\vspace{-.5cm}
\subsection{Main comments}
\noindent With Prophet, we perform also for the worldwide and three selected countries China, Italy, and Iran. Firstly, the worldwide predicted cumulative number of confirmed cases and deaths cases are plotted for a shorter period of the next 7 days. Secondly, we perform for 3 weeks for both confirmed and  deaths cases.
We plot only for the next three days for China, Italy, and Iran countries.\\
We can summarize our basic predictions as follows, for the worldwide and by country: 
\begin{itemize}
\item For Senegal (see Figure \ref{sn_comp}), the peak of the pandemic will be no later than the end of May. The predictions given by the SIR models and machine learning give roughly the same estimates. The authorities must take strict measures to stop the pandemic of COVID-19, because the peak can be reached around mid-May.
\item For worldwide, (see Figure \ref{ww_conf_deaths}), overall, each country of the whole world must take strict measures to stop the pandemic.  At $\sim$May 10, 2020 we may obtain $>$ 3 million 740 000 confirmed cases (see Table \ref{ww_2w_confcases}).

\item For Italy (see Figure \ref{it}) and Iran (see Figure \ref{ir}), the success of the anti-pandemic will be no later than the middle of May. The situation in Italy and Iran is still very severe. 
\item For China (see Figure \ref{cn}), based on optimistic estimation, the pandemic of COVID-19 would soon be ended within a few weeks in China.
\end{itemize}

\begin{figure}[h!]
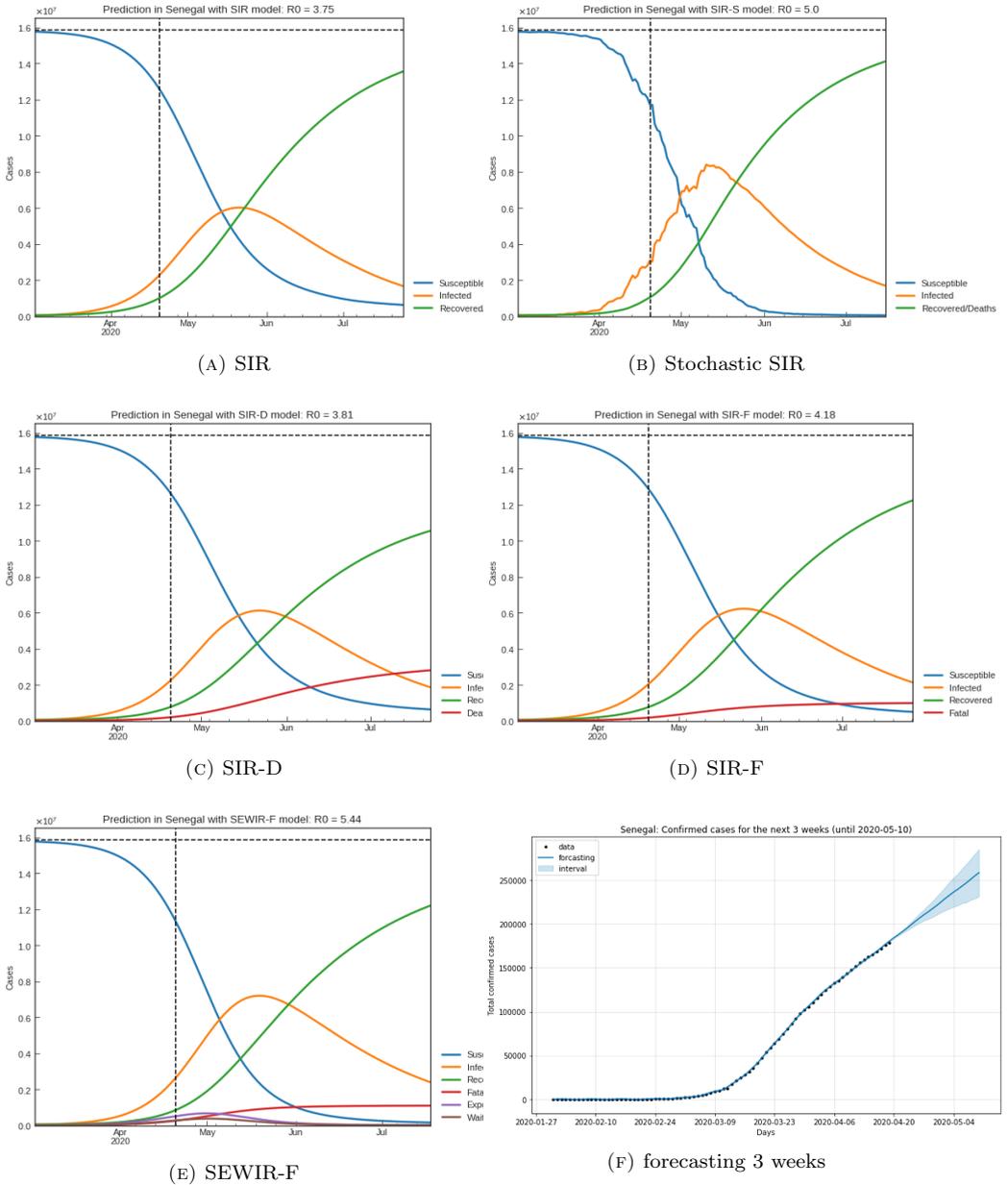

  \subfloat[SIR]{
	\begin{minipage}[1\width]{ 0.5\textwidth}
	   \centering
	   \includegraphics[width=1.1\textwidth]{figures/Senegal_Prediction_deterministe.png}
	\end{minipage}}
  \subfloat[Stochastic SIR]{
	\begin{minipage}[1\width]{ 0.5\textwidth}
	   \centering
	   \includegraphics[width=1.1\textwidth]{figures/ssir2.png}
	\end{minipage}}
	\newline
  \subfloat[SIR-D]{
	\begin{minipage}[1\width]{ 0.5\textwidth}
	   \centering
	   \includegraphics[width=1.1\textwidth]{figures/Senegal_Prediction_SIR-D.png}
	\end{minipage}}	
  \subfloat[SIR-F]{
	\begin{minipage}[1\width]{ 0.5\textwidth}
	   \centering
	   \includegraphics[width=1.1\textwidth]{figures/Senegal_Prediction_SIR-F.png}
	\end{minipage}}	
		\newline
  \subfloat[SEWIR-F]{
	\begin{minipage}[1\width]{ 0.5\textwidth}
	   \centering
	   \includegraphics[width=1.1\textwidth]{figures/Senegal_Prediction_SEWIR-F.png}
	\end{minipage}}	
  \subfloat[forecasting 3 weeks]{
	\begin{minipage}[1\width]{ 0.5\textwidth}
	   \centering
	   \includegraphics[width=1.1\textwidth]{figures/sn_3weeks_confirmed.png} 
	\end{minipage}}
		
\caption{Senegal: Comparative prediction between SIR models and machine learning technics}\label{sn_comp}
\end{figure}

\begin{table}[h!]
\begin{center}
\begin{tabular}{|c|c|c|c|} 
 \hline
      ds  &         $ \hat{y}$ &    $\hat{y}_{lower}$ &    $\hat{y}_{upper}$   \\
           \hline
2020-05-06  &	3.777760e+06 	&  3.512616e+06   &	4.006792e+06\\
2020-05-07  &	3.861820e+06 	&  3.573719e+06   &	4.106810e+06\\
2020-05-08  &	3.945427e+06 	&  3.633930e+06   &	4.213883e+06\\
2020-05-09  &	4.026816e+06 	&  3.691283e+06   &	4.311569e+06\\
2020-05-10   &	4.108123e+06 	&  3.740562e+06    &	4.415467e+06\\
\hline
\end{tabular}
\end{center}
\caption{Worldwide: predicted cumulative confirmed cases $\sim$May 10, 2020.}\label{ww_2w_confcases}
\end{table} 

\begin{figure}[h!]
  \subfloat[1 week confirmed]{
	\begin{minipage}[1\width]{ 0.5\textwidth}
	   \centering
	   \includegraphics[width=1.1\textwidth]{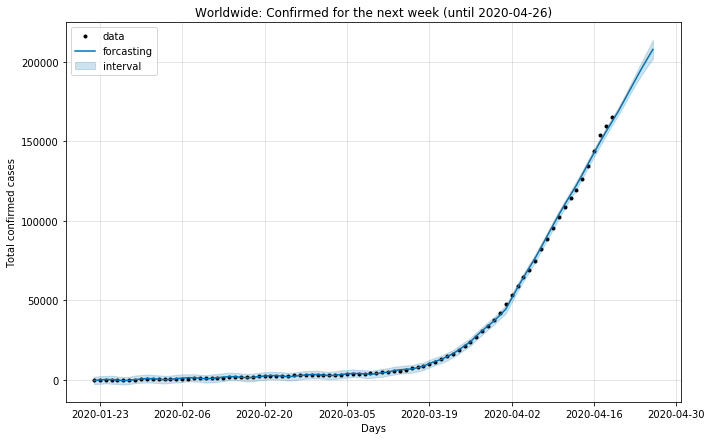}
	\end{minipage}}
  \subfloat[3 weeks confirmed]{
	\begin{minipage}[1\width]{ 0.5\textwidth}
	   \centering
	   \includegraphics[width=1.1\textwidth]{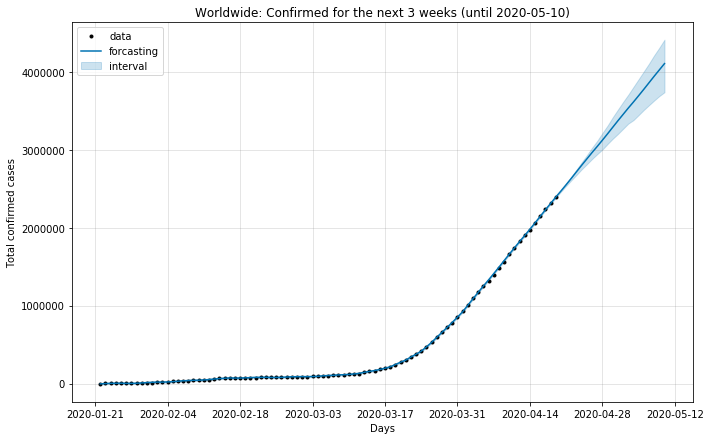}
	\end{minipage}}
	\newline
  \subfloat[1 week deaths]{
	\begin{minipage}[1\width]{ 0.5\textwidth}
	   \centering
	   \includegraphics[width=1.1\textwidth]{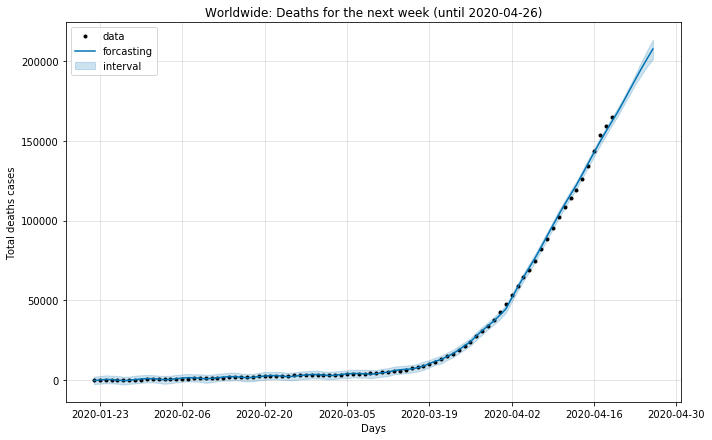}
	\end{minipage}}	
  \subfloat[3 weeks deaths]{
	\begin{minipage}[1\width]{ 0.5\textwidth}
	   \centering
	   \includegraphics[width=1.1\textwidth]{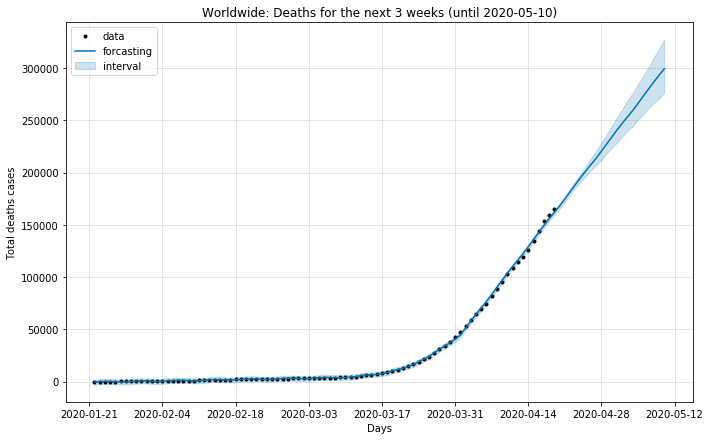}
	\end{minipage}}	
		
\caption{Worldwide: forecasting for confirmed \& deaths (next week and next 3 weeks)}\label{ww_conf_deaths}
\end{figure}

\begin{figure}[h!]
  \subfloat[Forcasting - confirmed in Italy]{
	\begin{minipage}[1\width]{ 0.5\textwidth}
	   \centering
	   \includegraphics[width=1.1\textwidth]{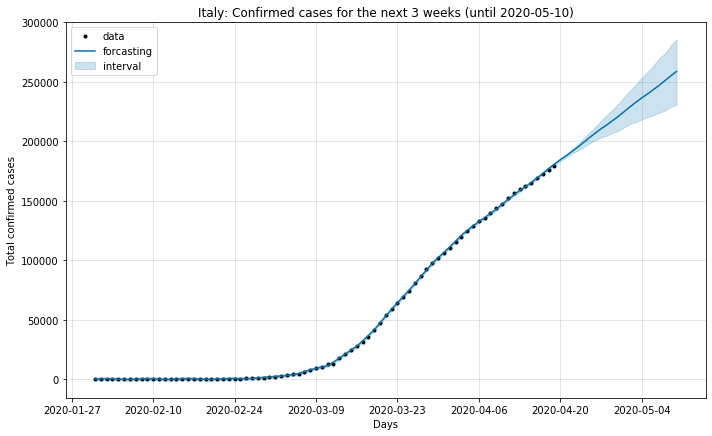}\label{it}
	\end{minipage}}
  \subfloat[Forcasting - confirmed in Iran]{
	\begin{minipage}[1\width]{ 0.5\textwidth}
	   \centering
	   \includegraphics[width=1.1\textwidth]{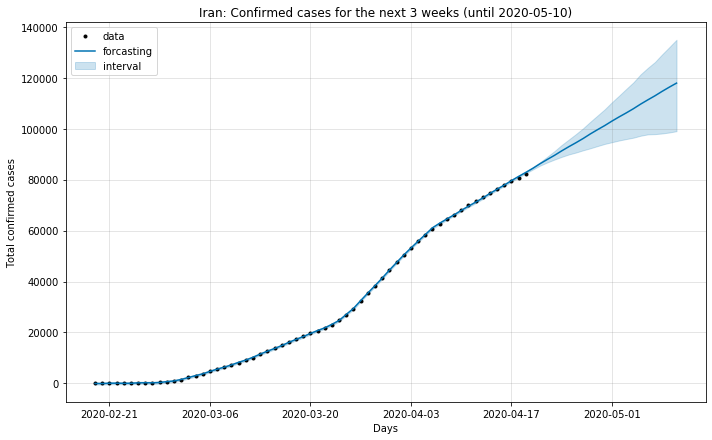}\label{ir}
	\end{minipage}}	
\newline
  \subfloat[Forcasting - confirmed in China]{
	\begin{minipage}[1\width]{ 0.5\textwidth}
	   \centering
	   \includegraphics[width=1.3\textwidth]{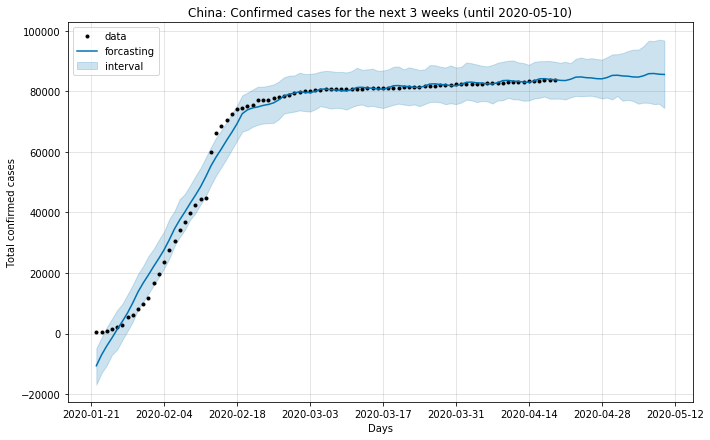}\label{cn}
	\end{minipage}}	
	
\caption{Italy, Iran, and China : forecasting for confirmed (next 3 weeks)}\label{ww_conf}
\end{figure}

\noindent Finally, due to the inclusion of suspected cases with clinical diagnosis into confirmed cases (quarantined cases), we can see severe situation in some cities, which requires much closer attention. 
Individuals, communities and governments have to fight against the spread of the coronavirus. And  thoughtful actions are to be taken.

\section{Conclusion and Perspectives}\label{ccl}
Under optimistic estimation, the pandemic in some countries (like China) will end soon within few weeks,  while for most of the countries in the world, the hit of anti-pandemic will be no later than mid-May. In Senegal, we expect the situation will end up at the beginning of May.\\
In front of good forecasting, it is  fundamental to get back to the estimation of parameters in general. And in the stochastic models, there is also another main issue: the estimation of  volatility parameters. These bring more works. And most of the time one considers the standard deviation to approximate them. But,  because of the difficulty to identify the asymptomatic cases,  like in Finance, should it be possible to introduce other ways to estimate them? One investigation way could be:
 \begin{itemize}
\item  $V= \displaystyle \frac{1}{n}\sum_{i=0}^{n-1} (x_i- \bar{x})^2 , \quad x_i, i= 0,...,  n-1$  are the measures given by the  infected numbers between two consecutive  periods:
 $x_i =  \ln (S_i, I_i, R_i, S_{i-1}, I_{i-1}, R_{i-1})$  (one could think to $x_i=  \ln( \frac {I_i}{I_{i-1}})$    provided that $, I_i, I_{i-1}>0$ with  possible other  conditions); $\bar{x}$ is statistical mean value of $x_i.$ And then $\sigma_1= \sqrt{V}.$\\
 \item If one thinks that the volatility cannot be equal to $0,$ i. e  we exclude the situation where $x_i= \bar{x}$ for any $i.$  
A possibility to take always into account the presence of volatility is  to use the approximation where the mean value of the sample is removed in the variance formula:  then one can consider the following estimation  for the 
 variance $V= \displaystyle \frac{1}{n}\sum_{i=0}^{n-1} x_i^2.$
 \end{itemize}
\noindent At the end of this work, we think that other questionings  could  merit to be studied such as
 \begin{itemize}
 \item  the deepening of the stochastic models, by considering the fractional Brownian motion, 
 \item the consideration  of  non-local terms in  some  models,
 \item some minimal actions   such as  stochastic optimization to control  the spreading of the disease,
 \item   and finally, the mean-field games could be an  interesting way  to investigate in that pandemic, lockdown problems.
 \end{itemize}

\subsection*{Acknowledgement}
The authors thanks the Non Linear Analysis, Geometry and Applications (NLAGA) Project for supporting this work. They thanks also the anonymous authors for their helpful comments.

\end{document}